\documentclass[conference]{IEEEtran}
%% INFOCOM 2012 addition:
\makeatletter
\def\ps@headings{%
\def\@oddhead{\mbox{}\scriptsize\rightmark \hfil \thepage}%
\def\@evenhead{\scriptsize\thepage \hfil \leftmark\mbox{}}%
\def\@oddfoot{}%
\def\@evenfoot{}}
\makeatother \pagestyle{headings}

\usepackage{times,amsmath,amssymb,graphicx,epsf,psfrag,epsfig,cite,color}
\ifCLASSINFOpdf

\else

\fi

%\def\thetabf{\hbox{\boldmath$\theta$\unboldmath}}

%\def\thetabf{\hbox{$\bf{\theta}$}}

%Define affiliation at footnote
% Make \small be smaller (8 pt)
%\def\small{\@setsize\small{9pt}\viiipt\@viiipt\let\@listi\@listI}

\def\boxit#1{\vbox{\hrule\hbox{\vrule\kern3pt
        \vbox{\kern3pt#1\kern3pt}\kern3pt\vrule}\hrule}}

\def\reals{ { {\rm  I \kern-0.15em R }  } }
\def\complex{ {\,{{\rm C} \kern-0.50em \raise0.20ex {  |}}\, }}

\def\Rbf{{\bf R}}

\def\Ac{{\cal A}}

\def\Pc{{\cal P}}

\def\be{\begin{equation}}
\def\ee{\end{equation}}

\def\defeq{{\stackrel{\Delta}{=}}}
\def\scalefig#1{\epsfxsize #1\textwidth}
%
%\newtheorem{theorem}{Theorem}[chapter]
%\newtheorem{conjecture}{Conjecture}[chapter]
%\newtheorem{lemma}{Lemma}[chapter]
%\newtheorem{definition}{Definition}[chapter]
%\newtheorem{corollary}{Corollary}[chapter]
%\newcounter{remarknr}[chapter]
%\newenvironment{remark}{\vskip\baselineskip
%\stepcounter{remarknr}\noindent{\bf
%Remark~\theremarknr.}}{\ \hfill $\Box$ \vskip\baselineskip}
%\newcounter{examplenr}[chapter]
%\newenvironment{example}[1]{\vskip\baselineskip
%\stepcounter{examplenr}\noindent{{\bf
%Example~\thechapter.\theexamplenr}\hskip .8em #1\\}}{\ \hfill $\Box$
%\vskip\baselineskip}

\def\Rxx{\Rbf_{\ssstyle X\kern-.1em X}}

\let\ssstyle=\scriptscriptstyle

% I add the following!

\def\eg{{\it e.g.,\ \/}}

\def\ie{{\it i.e.,\ \/}}
\def\Kout{\setbox1=\hbox{\Huge\bf K}\hbox to
1.05\wd1{\hspace{.05\wd1}% [arxiv_v2: inline-PS \special stripped, 291 chars]}}
\def\Sout{\setbox1=\hbox{\Huge\bf S}\hbox to 1.05\wd1{\hspace{.05\wd1}% [arxiv_v2: inline-PS \special stripped, 291 chars]}}

\def\ie{{\it i.e.,\ \/}}
\def\etc{{\it etc.}}
\def\defeq{{\,\stackrel{\Delta}{=}}\,}

\def\scalefig#1{\epsfxsize #1\textwidth}
\def\nn{{\nonumber}}

\newtheorem{lemma}{Lemma}
\newtheorem{theorem}{Theorem}

\newtheorem{definition}{Definition}

\begin{document}

\title{Dynamic Intrusion Detection in Resource-Constrained Cyber Networks}

\author{\IEEEauthorblockN{Keqin Liu,~Qing Zhao}
\IEEEauthorblockA{Electrical and Computer Engineering, University of California, Davis, CA 95616\\
Email: \{kqliu,qzhao\}@ucdavis.edu } }

\maketitle

\begin{abstract}
We consider a large-scale cyber network with $N$ components (\eg
paths, servers, subnets). Each component is either in a healthy
state ($0$) or an abnormal state ($1$). Due to random intrusions,
the state of each component transits from $0$ to $1$ over time
according to certain stochastic process. At each time, a subset of
$K~(K<N)$ components are checked and those observed in abnormal
states are fixed. The objective is to design the optimal scheduling
for intrusion detection such that the long-term network cost
incurred by all abnormal components is minimized. We formulate the
problem as a special class of Restless Multi-Armed Bandit (RMAB)
process. A general RMAB suffers from the curse of dimensionality
(PSPACE-hard) and numerical methods are often inapplicable. We show
that, for this class of RMAB, Whittle index exists and can be
obtained in closed form, leading to a low-complexity implementation
of Whittle index policy with a strong performance. For homogeneous
components, Whittle index policy is shown to have a simple structure
that does not require any prior knowledge on the intrusion
processes. Based on this structure, Whittle index policy is further
shown to be optimal over a finite time horizon with an arbitrary
length. Beyond intrusion detection, these results also find
applications in queuing networks with finite-size buffers.
\end{abstract}

\section{Introduction}
The objective of Intrusion Detection Systems (IDS) is to locate
malicious activities (\eg denial of service attack, port scans,
hackers) in the quickest way such that the infected parts can be
timely fixed to minimize the overall damage to the network. With the
increasing size, diversity, and interconnectivity of the cyber
system, however, intrusion detection faces the challenge of
scalability: how to rapidly locate intrusions and anomalies in a
large dynamic network with limited resources. The two basic
approaches to intrusion detection, namely, active probing and
passive monitoring~\cite{JajodiaEtal09Book,{DebarEtal05CN}}, face
stringent resource constraints when the network is large and
dynamic. Specifically, active-probing based approaches need to
choose judiciously which components of the network to probe to
reduce overhead; passive-monitoring based approaches need to
determine how to sample the network so that real-time processing of
the resulting data is within the computational capacity of the
IDS~\cite{KodialamLaksham03INFOCOM}. The problem is compounded by
the fact that the adversarial behaviors are typically random and
evolving.

In this paper, we address resource-constrained intrusion detection
in large dynamic cyber networks. Specifically, we consider a network
with $N$ heterogeneous components which can be paths, routers, or
subnets. At a given time, a component can be in a healthy state or
an abnormal state. An abnormal component remains abnormal until the
anomaly is detected and resolved. A healthy component may be
attacked and become abnormal if the attack is successful. We
consider a general attack model: the behavior of the intruder can be
arbitrarily correlated in time and varies across components, and
different attacks can be launched with different probabilities of
successfully compromising the component under attack. As a
consequence, the state of a component evolves according to an
arbitrary stochastic process until it is probed/sampled. When a
healthy component is probed/sampled, its state evolution (\ie how
likely it will become abnormal in each subsequent time instant) is
reset. This models the scenario where proactive actions are taken
(patches are installed, firewalls upgraded, \etc) by the IDS when
probing/sampling a component to refresh its immunity to attacks.
Note that this model is significantly different and more complicated
than the SIS (susceptible-infected-susceptible) model and its
variants (see, \eg~\cite{Wierman&Marchette}).

For each component in an abnormal state, a cost (depending on the
criticality of the component) per unit time is incurred. At each
time, the IDS can choose a subset of $K$ components to probe or
sample ($K$ is often much smaller than $N$ due to resource
constraints). The question here is how to dynamically probe or
sample these $N$ components to minimize the long-term cost over
time. The key is to learn from past observations and decisions and
dynamically adjust the probing/sampling actions.

\subsection{Main Results}
We formulate the dynamic intrusion detection problem as a special
class of Restless Multi-Armed Bandit (RMAB) process, where each
component is considered as an arm. While finding the optimal
solution to a general RMAB problem is PSPACE-hard with {\em
exponential} complexity in system size~\cite{Complexity}, we show
that for this class of RMAB at hand, several structural properties
exist that lead to simple robust solutions. Specifically, by
exploring the reset nature of the problem, we first show that a
sufficient statistic for choosing the optimal probing/sampling
actions is given by a two-dimensional vector of each arm that can be
easily updated at each time. This significantly reduces the state
space for optimal decision making. Second, we show that this RMAB is
indexable, thus an index policy---referred to as Whittle index
policy---with strong performance and {\em linear} complexity in the
size $N$ of the cyber network can be constructed. Third, we show
that the Whittle index can be obtained in closed form, leading to
negligible complexity of implementation. Fourth, we show that for
homogeneous components, the low-complexity Whittle index policy has
a simple robust structure that does not need any prior knowledge on
the stochastic attack model and achieves the optimal performance.

In the context of RMAB, our results contribute to the study of the
existence and optimality of Whittle index policy. In 1988, Whittle
generalized the classic MAB to RMAB, a more powerful stochastic
model to take into account system dynamics that cannot be directly
controlled~\cite{whittle}. Whittle proposed an index policy that has
been shown to be asymptotically (when the system size approaches
infinity) optimal under certain
conditions~\cite{Weber,{Weber:addendum}}. The difficulty of Whittle
index policy lies in the complexity of establishing its existence
(the so-called indexability) and computing the index. There is no
general characterization regarding which class of RMAB is indexable,
and little is known about the optimality of Whittle index (when it
does exist) for finite-size systems. In this paper, we present a
significant class of indexable RMAB with practical applications for
which Whittle index policy is shown to be optimal for homogeneous
arms. This result lends a strong justification for the existence and
the optimality of linear complexity algorithms based on the Whittle
index. Beyond intrusion detection, this special class of RMAB and
the corresponding results can also be applied to the holding cost
minimization problem in queuing networks with finite-size buffers,
as elaborated in Sec.~\ref{sec:queue}.

\subsection{Related Work}\label{sec:relate}
In~\cite{Lee&Stolfo98USS}, the problem of intrusion recognition by
classifying system patterns was addressed based on data mining.
Without resource constraint, the focus is on the best selection of
system features to detect intrusion from the accessible system data
statistics. Similar problems of statistical modeling of data and
detection algorithms under various scenarios were considered in a
number of papers, \eg
\cite{Denning87TSE,{Ghosh&Etal99IDNM},{Roesch99},{Bass00COMACM}}.
These studies mainly address the intrusion detection problem from a
machine learning or pattern recognition perspective and do not
consider the constraint on the system monitoring capacity. Our work
is a stochastic control approach for intrusion detection in large
networks with resource constraints, where the problem of how to
adaptively allocate the limited detecting and repair power for
performance optimization is of great interest.
In~\cite{BarfordEtal09}, a set of heuristic detection, path
selection and link anomaly localization algorithms were proposed
based on the active probe-enabled network measurements.
In~\cite{AlpcanBasar06}, the intrusion detection problem was
formulated as a zero-sum game with two players (the intruder and the
IDS), where the game evolutions and outcomes were studied through
numerical examples based on Markovian decision processes and
Q-learning. The previous algorithm designs mainly take into account
the static or Markovian dynamics of the networks. The results in
this paper thus represent a step forward over the previous work by
addressing the general non-Markovian network dynamics.

In the literature of RMAB, the indexability was studied
in~\cite{Nino}, where efficient algorithms were constructed to
numerically test indexability and compute Whittle index for
finite-state systems. For the problem at hand, the system state
space is infinite, and thus numerical methods are generally
infeasible, even for a fixed realization of system parameters. We
show that, however, indexability holds regardless of the system
parameters and Whittle index can be solved in closed-form. The
optimality of Whittle index policy was subsequently established for
homogenous arms. For a special class of RMAB as detailed in the next
paragraph, the optimality of Whittle index policy was established
for homogeneous arms under certain conditions. In general, the
optimality of Whittle index policy has rarely been established.
Nevertheless, numerical studies have demonstrated the
near-optimality of Whittle index policy for numerous RMAB models
(see,
\eg~\cite{HeEtal11INFOCOM,{GlazebrookMitchell:02},Ansell,{Glazebrook1}}).

In the context of dynamic spectrum access and multi-agent tracking
systems, a class of RMAB modeled by a two-state Markovian model was
considered in~\cite{Liu&Zhao:08IT,{Leny08ACC}}. The indexability was
established and Whittle index was solved in closed form. The
Markovian model yields special structural properties of the system
dynamic equations that significantly simplify the establishment of
the indexability and Whittle index. However, these structural
properties no longer hold for the RMAB considered here that deals
with arbitrary underlying random processes, and the approaches
in~\cite{Liu&Zhao:08IT,{Leny08ACC}} do not apply. In this paper, we
propose a new approach for establishing the indexability and the
closed-form Whittle index based on a comparing argument on the
optimal stopping times. Besides the RMAB model at hand, this
approach is extendable to general two-state reset processes with
partially observable states. In~\cite{Liu&Zhao:08IT}, Whittle index
policy was shown to be equivalent to the myopic policy for
homogeneous arms, which leads to its optimality under certain
conditions based on the previous results on the myopic policy
established
in~\cite{Zhao&etal:08TWC,{Ahmad&etal},{Ahmad&Liu:09Allerton}}.
Again, the approaches
in~\cite{Zhao&etal:08TWC,{Ahmad&etal},{Ahmad&Liu:09Allerton}} are
based on the special properties, \eg the linearity of the value
function, of the myopic policy under the Markovian model. For the
problem at hand, although the equivalence between Whittle index
policy and the myopic policy is preserved for homogeneous arms, the
properties under the Markovian model no longer hold. To show the
optimality, we take a different approach by establishing the
monotonicity of the value function, as detailed in
Sec.~\ref{sec:optwhittle}.

\section{Network Model}\label{sec:model}

Consider a cyber network with $N$ inhomogeneous components that are
subject to random attacks over time. At each discrete time, each
component is either in the healthy state ($0$) or the abnormal state
($1$). If an attack to a healthy component is successful, the
component enters the abnormal state until it is probed and fixed. We
assume that different components experience statistically
independent but not necessarily identical attack processes.

Each attack process can be arbitrarily correlated over time.
Consequently, the state evolution of a component is given by an
arbitrary probability sequence $\{p_n(t)\}_{t\ge0}$, where $p_n(t)$
is the probability that component $n$ enters state $1$ after $t$
steps since the last time it was probed. Specifically, if a
component (say, component $n$) is probed and observed in state~$0$,
a simple maintenance action is taken which resets its state
evolution according to $\{p_n(t)\}_{t\ge0}$. If component $n$ is
observed in state~$1$, a sophisticated repair action is taken, and
the component will be back to the normal state in the next time
instant\footnote{Parallel results can be obtained for the model in
which a repaired component cannot be guaranteed to be healthy in the
next time instant and are omitted here due to the space limit.} and
then evolve according to $\{p_n(t)\}_{t\ge0}$. Note that
$\{p_n(t)\}_{t\ge0}$ is a monotonically increasing sequence since
state~$1$ is absorbing when the component is unobserved. A simple
example is given by the i.i.d. attack process, where component $n$
is compromised with a constant probability $q_n\in(0,1)$ at each
time. For this example, the state of component $n$ transits as a
Markov chain shown in Fig.~\ref{fig:Markov}, and we have
\begin{eqnarray}\nn
p_n(t)&=&1-(1-q_n)^t,
\end{eqnarray}which monotonically converges to $1$ at the
geometric rate ($1-q_n$) as $t$ increases. In general, we do not
require any specific form of $\{p_n(t)\}_{t\ge0}$.

\begin{figure}
\centerline{
\begin{psfrags}
\psfrag{A}[c]{$0$}\psfrag{B}[c]{$1$}\psfrag{q}[c]{$q_n$}\psfrag{1-q}[c]{$1-q_n$~~}
\psfrag{1}[c]{$1$} \scalefig{.4}\epsfbox{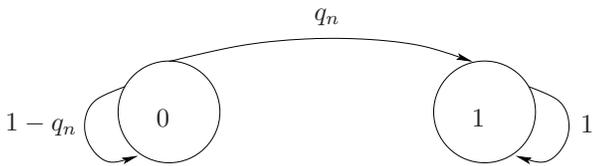}
\end{psfrags}}
\caption{An example based on the Markovian state model.}
\label{fig:Markov}
\end{figure}

For each abnormal component (say, component $n$), a cost $c_n$ is
incurred per unit time. With limited resource, only a subset of
$K~(K<N)$ components can be probed for maintenance/repair. The
objective is to minimize the long-term average network cost by
designing the optimal sequential component probing policy.

\section{RMAB Formulation}\label{sec:formulation}
In this section, we formulate the intrusion detection problem as a
special class of Restless Multi-Armed Bandit (RMAB) process. The
concepts of indexability and Whittle index are also introduced.

\subsection{RMAB and Sufficient Statistics}
In a general RMAB, a player chooses $K$ out of $N$ independent arms
to activate at each time based on the current states of all arms. At
each time, the state of each arm transits according to two
potentially different Markovian rules depending on whether it is
made active or passive. Each arm contributes an immediate reward
depending on its current state and the imposed action. The objective
is to maximize the long-term reward by optimally selecting arms to
activate over time based on the arm state evolutions.

We need to note that the states of all arms are assumed to be
completely observable and obey Markovian transition rules in an
RMAB. However, for the intrusion detection problem at hand, the
state ($0/1$) of each component is not observable unless it is
probed, and the state transition rules are non-Markovian in general.
It is thus not suitable to model the component state as the arm
state. By exploring the reset nature of the problem, we show in the
next lemma that a sufficient statistic for optimal decision making
is given by the two-dimensional vector set $\{(i_n,t_n)\}_{n=1}^N$,
where $i_n\in\{0,1\}$ is the last observed state of component $n$
and $t_n$ the time lapsed since the last observation. As a
consequence, we can treat $(i_n,t_n)$ as the arm state of component
$n$, which is complete observable but with an infinite dimension. In
the rest of paper, we refer to $(i_n,t_n)$ as the {\em arm state} of
component $n$ to distinguish it from the component state
$S_n\in\{0/1\}$. We also let
$a_n\in\{\mbox{active/probe}~(1),\mbox{passive/not probe}~(0)\}$
denote the probing action on arm $n$.
\begin{lemma}
For the intrusion detection problem, the vector set
$\{(i_n,t_n)\}_{n=1}^N$ is a sufficient statistics for optimal
decision making. Furthermore, given the current probing actions and
observations, the arm state $(i_n,t_n)$ of component $n$ transits
according to the following Markovian rules.
\begin{eqnarray}\nn
\Gamma(i_n,t_n)=\left\{\begin{array}{ll}(0,1),~&~\mbox{if}~a_n=1,~S_n=0
\\(1,1),~&~\mbox{if}~a_n=1,~S_n=1\\
(i_n,t_n+1),~&~\mbox{if}~a_n=0\end{array}\right.,
\end{eqnarray}where $\Gamma(\cdot)$ denotes the one-step transition
of the arm state given the current arm state and action.
\end{lemma}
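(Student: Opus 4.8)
The plan is to treat the intrusion detection problem as a partially observable Markov decision process (POMDP) in which the decision maker's information at time $t$ is the entire history $H_t$ of past probing actions and observations, and then to show that this history collapses to the low-dimensional statistic $\{(i_n,t_n)\}_{n=1}^N$. By the standard theory of POMDPs, the conditional distribution of the current component states given $H_t$---the belief state---is a sufficient statistic for optimal decision making. Since the $N$ components evolve independently, this joint belief factorizes into per-component marginals, so it suffices to show that the marginal $P(S_n=1\mid H_t)$ for each component $n$ is a deterministic function of the pair $(i_n,t_n)$ alone.

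First I would exploit the reset structure to compute this marginal explicitly. Between two consecutive probes of component $n$, no observation of that component is made, so the only information accrued is the passage of time: the belief evolves deterministically through the prior sequence $\{p_n(t)\}_{t\ge0}$, while a probe collapses it to a point mass determined by the revealed state. The observation at the most recent probe thus screens off all earlier history, because probing reinitializes the state evolution. Tracking the two reset cases yields
\begin{eqnarray}\nn
P(S_n=1\mid H_t)=\left\{\begin{array}{ll}p_n(t_n),&\mbox{if}~i_n=0,\\ p_n(t_n-1),&\mbox{if}~i_n=1,\end{array}\right.
\end{eqnarray}
where the index shift in the second line reflects that a component observed in state~$1$ is repaired and returns to the healthy state one step later before evolving afresh according to $\{p_n(t)\}$. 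Hence the belief is determined by $(i_n,t_n)$, and since any statistic that determines a sufficient statistic is itself sufficient, the vector set $\{(i_n,t_n)\}_{n=1}^N$ is a sufficient statistic. The expected immediate cost $\sum_n c_n\,P(S_n=1\mid H_t)$ is likewise a function of this statistic.

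The transition rule $\Gamma$ then follows by a direct case analysis that simultaneously verifies that the induced dynamics on $\{(i_n,t_n)\}$ are Markovian. If $a_n=0$, component $n$ is not probed, no new observation is obtained, the last observed value $i_n$ is unchanged, and the elapsed time increments, giving $(i_n,t_n+1)$. If $a_n=1$ and the revealed state is $S_n=0$, the maintenance action resets the evolution and the next arm state is $(0,1)$; if $S_n=1$, the repair action returns the component to health and the next arm state is $(1,1)$. In every case the next arm state depends only on the current arm state, the action, and the fresh observation, which establishes the claimed Markovian transition.

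I expect the main obstacle to be the belief-collapse step: justifying rigorously that conditioning on the entire history $H_t$ is equivalent to conditioning only on the last observation of component $n$ and the time elapsed since. This hinges on the reset/absorbing nature of the model---state~$1$ is absorbing until a probe occurs, and a probe reinitializes the trajectory---so that the most recent observation is a sufficient summary of the past for predicting the current state. Care is also needed to handle the index shift between the two reset cases correctly and to treat components not yet probed, which can be folded into an initial arm state.
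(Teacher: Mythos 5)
Your proposal is correct and takes essentially the same route as the paper's (much terser) proof: both rest on the observation that the reset structure makes the future state statistics of component $n$ depend only on the last observed state and the time elapsed since, with the transition rule following by direct case analysis. Your explicit belief computation, $P(S_n=1\mid H_t)=p_n(t_n)$ for $i_n=0$ and $p_n(t_n-1)$ for $i_n=1$, matches the index convention the paper uses later in its single-arm dynamic equations, so the added POMDP detail is a faithful elaboration rather than a different argument.
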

\begin{proof}
Recall that each active action on each component (say, component
$n$) resets its state evolution according to the probability
sequence $\{p_n(t)\}_{t\ge0}$ (see Sec.~\ref{sec:model}). Given
$(i_n,t_n)$, the future state statistics of component $n$ is
independent of previous actions and observations. The vector set
$\{(i_n,t_n)\}_{n=1}^N$ is thus a sufficient statistic. The one-step
update of $\{(i_n,t_n)\}_{n=1}^N$ is straightforward.
\end{proof}
Now we complete the RMAB formulation of the intrusion detection
problem by observing that the immediate reward $R_n(S_n)$ offered by
component $n$ can be modeled by $-c_n$ if it is currently in the
abnormal state and $0$ otherwise. Consequently, the reward
maximization is equivalent to the cost minimization. In the rest of
the paper, we use RMAB-IDS to denote this class of RMAB.

\subsection{The Optimality Equation}In this subsection, we establish the optimality equation
for RMAB-IDS. We consider the following strong average-reward
criterion under which not only the steady-state average reward but
also the transient reward starting from an arbitrary initial arm
state is maximized, leading to the maximum long-term total reward
growth rate.
\begin{eqnarray}\label{eqn:dynamic}
G+F(\{(i_n,t_n)\}_{n=1}^N)=\max_{\Ac}
\mathbb{E}_{\Ac}[\sum_{n=1}^NR_n(S_n)\\\nn+F(\{\Gamma(i_n,t_n|a_n,S_n)\}_{n=1}^N)],
\end{eqnarray}where $\Ac=\{a_n\}_{n=1}^N$ with $\sum_{n=1}^Na_n=K$
denotes the current probing actions, $G$ the maximum steady-state
average reward over the infinite horizon, $F(\cdot)$ the transient
reward starting from the initial arm states, and
$\mathbb{E}_{\Ac}[\cdot]$ the expectation operator given $\Ac$.
Solving the optimality equation~\eqref{eqn:dynamic} suffers from the
curse of dimension and has an exponential complexity for dynamic
programming. In Sec.~\ref{sec:results}, we show that for RMAB-IDS,
the linear-complexity Whittle index policy exists and can be
obtained in closed form with a near-optimal performance.

\subsection{Definition of Whittle Index Policy} The key idea of
Whittle index policy is to provide a subsidy for passivity to
measure the attractiveness of activating an arm based on its current
state. Based on the strong decomposability of Whittle index, it is
sufficient to focus on each single arm~\cite{whittle}.

\subsubsection{Single-Armed Bandit with Subsidy} Consider the
single-armed bandit for the intrusion detection problem with only
one arm/component. At each time instant, we decide whether to
activate the arm or make it passive. Assume that a subsidy for
passivity, denoted by $\lambda$, is gained whenever the arm is made
passive. We have the following optimality equations. For simplicity
of presentation, we will drop the component index from the
notations.
\begin{eqnarray}\nn
g+f(0,t)&=&\max\{\lambda-p(t)c+f(0,t+1),\\\nn&&-p(t)c+p(t)f(1,1)+(1-p(t))f(0,1)\}\\\label{eqn:dynamicSingle}
&=&\max\{\lambda+f(0,t+1),\\\nn &&
p(t)f(1,1)+(1-p(t))f(0,1)\},\\\label{eqn:dynamicSingle1}
g+f(1,t)&=&\max\{\lambda+f(1,t+1),\\\nn &&
p(t-1)f(1,1)+(1-p(t-1))f(0,1)\},
\end{eqnarray}where $g$ and $f(\cdot)$ denote, respectively, the maximum
steady-state average reward and the transient reward by playing the
single arm. The optimal policy for this single-arm problem is
essentially given by an optimal partition of the arm state space
$\bigcup_{i=0,1}\{(i,~t)\}_{t\ge1}$ into a passive set
\begin{eqnarray}\nn
\Pc(\lambda)&=&\{(i,~t):a^*(i,~t,~\lambda)=0\}\\\nn
&=&\{(i,~t):\lambda+f(i,t+1)\\\nn&&\ge
p(t-i)f(1,1)+(1-p(t-i))f(0,1)\}
\end{eqnarray}
and its complement, an active set
$\Ac(\lambda)=\{(i,~t):a^*(i,~t,~\lambda)=1\}$, where
$a^*(i,~t,~\lambda)$ denotes the optimal action at arm state
$(i,~t)$ under subsidy $\lambda$.

\subsubsection{Indexability and Whittle Index} To define Whittle index policy,
it is required that the RMAB is {\em indexable}~\cite{whittle}.
\begin{definition}\label{def:indexability}
An RMAB is {\em indexable} if for each arm, the passive set
$\Pc(\lambda)$ increases monotonically from the empty set $\phi$ to
the entire state space $\bigcup_{i=1,2}\{(i,~t)\}_{t\ge1}$ as the
subsidy $\lambda$ increases from $-\infty$ to $+\infty$. An RMAB is
{\em strictly indexable} if the states join the passive set one by
one (instead of as groups) as $\lambda$ continuously increases.
\end{definition}

Given the indexability, the Whittle index $W(i,~t)$ of an arm state
$(i,~t)$ is defined as the infimum subsidy $\lambda$ that makes the
passive action optimal at $(i,~t)$:
\begin{eqnarray}\nn
W(i,~t)&=&\inf\{\lambda:~a^*(i,~t,~\lambda)=0\}\\\nn
&=&\inf\{\lambda:~\lambda+f(i,t+1)\\\nn&&\ge
p(t-i)f(1,1)+(1-p(t-i))f(0,1)\}.
\end{eqnarray}

Whittle index essentially measures how attractive it is to activate
an arm based on subsidy $\lambda$. The minimum subsidy $\lambda$
that is needed to move an arm state from the active set to the
passive set under the optimal partition thus measures how attractive
this arm state is.

Whittle index policy is naturally given by playing the $K$ arms with
the largest Whittle indexes.

\section{Indexability and the Closed-Form Whittle Index for RMAB-IDS}\label{sec:results}
In this section, we establish the indexability of RMAB-IDS and solve
for Whittle index in closed form. Based on the indexability and
Whittle index, we study the optimal policy for RMAB-IDS under a
relaxed constraint.

\subsection{Indexability}\label{subsec:indexability}
\begin{theorem}\label{thm:indexability}
RMAB-IDS is indexable.
\end{theorem}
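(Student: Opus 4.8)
The plan is to invoke Definition~\ref{def:indexability} and reduce the claim to a statement about a single arm: I must show that the passive set $\Pc(\lambda)$ is nondecreasing in $\lambda$ and sweeps from $\phi$ to the whole arm-state space. Equivalently, it suffices to prove that for each fixed arm state $(i,t)$ the set of subsidies for which passivity is optimal is an upper interval $[W(i,t),\infty)$; once every state has this property the passive sets are automatically nested. So the real target is the following: if the passive action is optimal at $(i,t)$ under subsidy $\lambda$, it remains optimal under every $\lambda'>\lambda$.

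First I would exploit the reset structure to turn the single-arm problem into an optimal stopping problem. Under passivity the arm state $(i,t)$ evolves deterministically to $(i,t+1)$, and only an activation resets it (to $(0,1)$ or $(1,1)$). Hence an optimal single-arm policy is specified by, from each reset state, a stopping time telling when to next activate, and the average reward can be written via a renewal-reward argument over the cycles between consecutive activations: each passive step contributes the subsidy $\lambda$ minus the expected holding cost $p(\cdot)c$, while an activation resets the statistics. This makes the dependence of the achievable reward on $\lambda$ explicit and linear in the number of passive steps per cycle.

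Next I would establish the threshold structure in $t$. Because $\{p(t)\}$ is monotonically increasing, the expected holding cost incurred by remaining passive grows with $t$ while the benefit of probing (resetting a likely-abnormal component) also grows with $t$; I would argue from this that the advantage of activating over continuing is monotone in $t$, so that the optimal rule is passive iff $t\le\tau_i(\lambda)$ for a threshold $\tau_i(\lambda)$. I would also note that states $(0,t)$ and $(1,t+1)$ share identical future statistics (after repair, a component observed abnormal becomes healthy and then evolves with the shifted sequence governed by $p(t-1)$), so the two chains collapse to a single increasing sequence and one threshold suffices.

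The hardest part will be the monotonicity of this threshold, namely $\tau_i(\lambda)$ nondecreasing in $\lambda$, because the relative value function $f(\cdot)$ and the gain $g$ both depend on $\lambda$, so the two sides of the action comparison in~\eqref{eqn:dynamicSingle}--\eqref{eqn:dynamicSingle1} move simultaneously. I would circumvent this self-reference with a comparison argument on the stopping times themselves rather than on $f$: fixing two subsidies $\lambda<\lambda'$ with their associated optimal stopping policies, I would compare the renewal-reward values of a passive-favoring policy against a less-passive one and show the difference is nondecreasing in $\lambda$, since the more-passive policy collects the subsidy over a larger expected fraction of the cycle, so passivity can only become more attractive as $\lambda$ grows. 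To keep the gain term $g$ from obscuring this, I expect to first run the comparison for the $\beta$-discounted single-arm problem, where no gain term appears and the interchange of stopping times is clean, and then recover the average-reward statement by letting $\beta\to1$. Combining the threshold structure with this monotonicity yields nested passive sets, and taking $\lambda\to\pm\infty$ gives the full sweep from $\phi$ to the entire state space, establishing indexability.
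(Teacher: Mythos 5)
Your proposal is correct and follows essentially the same route as the paper: reduce to the single-arm problem, recast it as a renewal-reward/optimal-stopping problem over cycles between activations, and show that when comparing two stopping times the value difference is linear in $\lambda$ with the more-passive policy collecting the subsidy over a larger fraction of the cycle (this is exactly the paper's coefficient $t-s+p(t)-p(s)\ge 0$ in~\eqref{eqn:hold}), so the optimal stopping times are nondecreasing in $\lambda$. The only difference is that your detour through the $\beta$-discounted problem is unnecessary, since the renewal-reward formula~\eqref{eqn:g} already gives the gain $g$ explicitly as a function of the stopping time and $\lambda$, eliminating the self-reference you were worried about.
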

\begin{proof}
Consider the single-armed bandit with subsidy. Without loss of
generality, we assume that the cost $c=1$. Define stopping time
$t_i$ as the number of steps until the first activation after
observing the arm in component state $i\in\{0,1\}$. We can rewrite
the dynamic equations~\eqref{eqn:dynamicSingle}
and~\eqref{eqn:dynamicSingle1} as follows.
\begin{eqnarray}\nn
f(0)&=&\max_{t_0\ge1}\{-gt_0+\lambda(t_0-1)-
\sum_{k=1}^{t_0}p(k)~~~~\\\nn&&+p(t_0)f(1)+(1-p(t_0))f(0)\},\\\nn
f(1)&=&\max_{t_1\ge1}\{-gt_1+\lambda(t_1-1)-
\sum_{k=1}^{t_1}p(k-1)~~~~\\\nn&&+p(t_1-1)f(1)+(1-p(t_1-1))f(0)\},
\end{eqnarray}where $f(i)~(i\in\{0,1\})$ is the transient reward
starting from arm state $(i,0)$. Note that we can set $f(0)=0$ since
only $f(1)-f(0)$ is determined by the above equations. We thus have
\begin{eqnarray}\label{eqn:dynamicTime}
0&=&\max_{t_0\ge1}\{-gt_0+\lambda(t_0-1)-
\sum_{k=1}^{t_0}p(k)~~~~\\\nn&&+p(t_0)f(1)\},\\\label{eqn:dynamicTime1}
f(1)&=&\max_{t_1\ge1}\{-gt_1+\lambda(t_1-1)-
\sum_{k=1}^{t_1}p(k-1)~~~~\\\nn&&+p(t_1-1)f(1)\}.
\end{eqnarray}

To prove indexability, it is equivalent to prove that the optimal
$\{t^*_i\}_{i=0,1}$ in ~\eqref{eqn:dynamicTime}
and~\eqref{eqn:dynamicTime1} are nondecreasing with $\lambda$. For
the case that $\lambda<0$, all states are in the active set, \ie
$t^*_i=1$ for $i\in\{0,1\}$. This is because that both the time
portion of the occurrence of the abnormal component state and the
passive time are minimized by always activating the arm.

Consider the case that $\lambda\ge0$. We should always make the arm
passive if the observation of the component state in the previous
slot is $1$, since the current component state is guaranteed to be
$0$ after repair and there is no benefit to observe it again.
Consequently, $t_1^*>1$. Combined with~\eqref{eqn:dynamicTime}
and~\eqref{eqn:dynamicTime1}, we further observe that
$t_1^*=t_0^*+1$. Note that this holds not only for the optimal
stopping times $\{t_i^*\}_{i=0,1}$ but also for all stationary
policies with $t_1>1$. By considering $t^*_i$
in~\eqref{eqn:dynamicTime} and~\eqref{eqn:dynamicTime1}, we can
solve for $f(1)$ and $g$ and obtain
\begin{eqnarray}\label{eqn:g}
g=\frac{\lambda(t_0^*-1+p(t_0^*))-\sum_{k=1}^{t_0^*}p(k)}{t_0^*+p(t_0^*)}.
\end{eqnarray}

Now suppose that it is better to activate the arm at the $t$-th step
instead of any earlier step after observing component state $0$. We
have
\begin{eqnarray}\nn
\frac{\lambda(t-1+p(t))-\sum_{k=1}^{t}p(k)}{t+p(t)}\\\label{eqn:compare}
\ge\frac{\lambda(s-1+p(s))-\sum_{k=1}^{s}p(k)}{s+p(s)},~\forall~s\in\{1,\cdots,t\}.
\end{eqnarray}We can further simplify~\eqref{eqn:compare} and obtain
for all $s\in\{1,\cdots,t\}$,
\begin{eqnarray}\nn
\lambda(t-s+p(t)-p(s))\\\label{eqn:hold}
\ge\sum_{k=1}^{t}p(k)(s+p(s))-\sum_{k=1}^{s}p(k)(t+p(t)).
\end{eqnarray}Based on the monotone property of $\{p(t)\}_{t\ge0}$,
we have $t-s+p(t)-p(s)\ge0$ and~\eqref{eqn:hold} keeps true as
$\lambda~(\lambda\ge0)$ increases. Equivalently, the set of $t$ for
which~\eqref{eqn:compare} and~\eqref{eqn:hold} are true is
nondecreasing in $\lambda$. We thus conclude that
$\{t^*_i(\lambda)\}_{i=0,1}$ are nondecreasing in $\lambda$. Since
this further implies that $\Pc(\lambda)$ is nondecreasing in
$\lambda$, we proved the indexability.
\end{proof}

\subsection{The Closed-Form Whittle Index}\label{subsec:whittleindex}
Given the indexability established in Sec~\ref{subsec:indexability},
we proceed to solve for the closed-form Whittle index of RMAB-IDS.
For simplicity of presentation, we focus on the case that the bandit
is strictly indexable (see Definition~\ref{def:indexability}), \ie
there is no tie among the Whittle indexes. A simple condition in the
following is adopted to guarantee the strict indexability.

\noindent \emph{C1:} $p(t+1)-p(t)$ is strictly decreasing with $t$.

Note that C1 is always satisfied under the Markovian state model
(see Sec.~\ref{sec:model}). As shown in the following theorem, under
C1, RMAB-IDS is strictly indexable. The closed-form Whittle index
function is subsequently obtained.
\begin{theorem}\label{thm:whittleindex}
Under C1, RMAB-IDS is strictly indexable and the Whittle index
$W(\cdot)$ is given below.
\begin{eqnarray}\label{eqn:whittleindex}
W(0,t)&=&(\frac{p(t+1)(t+p(t))}{1+p(t+1)-p(t)}-\sum_{k=1}^tp(k))c,~~\\\label{eqn:whittleindex2}
W(1,t)&=&W(0,t-1),~W(0,0)\defeq0.
\end{eqnarray}
\end{theorem}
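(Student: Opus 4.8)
The plan is to establish two claims: first, that under C1 the bandit is strictly indexable (no ties among Whittle indexes), and second, that the Whittle index has the stated closed form. The natural starting point is the stopping-time reformulation already developed in the proof of Theorem~\ref{thm:indexability}. Recall that for $\lambda\ge0$ the optimal stopping times satisfy $t_1^*=t_0^*+1$, and equation~\eqref{eqn:g} gives the average reward $g$ as a function of $t_0^*$ and $\lambda$. The key observation is that the Whittle index $W(0,t)$ is precisely the critical subsidy $\lambda$ at which the arm state $(0,t)$ becomes indifferent between joining the passive set and staying active—equivalently, the value of $\lambda$ at which activating at step $t$ and activating at step $t+1$ yield the same average reward.

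First I would work directly from the indifference condition. Under strict indexability, as $\lambda$ increases the states join the passive set one at a time, so $W(0,t)$ is the threshold at which the optimal $t_0^*$ switches from $t$ to $t+1$. Using the expression for $g$ in~\eqref{eqn:g}, I would set the value obtained by stopping at $t$ equal to the value obtained by stopping at $t+1$; that is, impose equality in~\eqref{eqn:compare} with the pair $(t+1,t)$. Solving the resulting linear equation in $\lambda$ should isolate $\lambda$ and yield the closed-form expression
\begin{eqnarray}\nn
\lambda=\left(\frac{p(t+1)(t+p(t))}{1+p(t+1)-p(t)}-\sum_{k=1}^tp(k)\right)c,
\end{eqnarray}
which is exactly~\eqref{eqn:whittleindex}. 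The relation $W(1,t)=W(0,t-1)$ then follows immediately from the fact, noted in the previous proof, that a state observed as $1$ at lapse $t$ behaves identically to a state observed as $0$ at lapse $t-1$ (since repair returns the component to state $0$, shifting the index by one step), and the boundary convention $W(0,0)\defeq0$ handles the freshly-probed state.

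Second, to upgrade indexability to \emph{strict} indexability I would show that under C1 the thresholds $W(0,t)$ are strictly increasing in $t$, so that distinct states acquire distinct critical subsidies and therefore join the passive set one at a time rather than in groups. This amounts to verifying that the closed-form expression in~\eqref{eqn:whittleindex}, viewed as a function of $t$, is strictly monotone; the hypothesis that $p(t+1)-p(t)$ is strictly decreasing (the concavity of $\{p(t)\}$) is what drives this strict monotonicity. I expect the main obstacle to be precisely this monotonicity verification: the index is a ratio whose numerator and denominator both depend on $t$ through $p(t)$, $p(t+1)$, and the partial sum $\sum_{k=1}^t p(k)$, so showing the difference $W(0,t+1)-W(0,t)>0$ will require carefully combining the strict decrease of the increments $p(t+1)-p(t)$ with the monotonicity of $\{p(t)\}$ itself. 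I would handle this by forming the difference, clearing the positive denominators, and reducing to an inequality in the increments that C1 makes strict; the rest of the argument—extracting the closed form—is a routine linear solve and need not be belabored.
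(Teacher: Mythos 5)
Your proposal is correct and follows essentially the same route as the paper's own proof: the closed form is extracted from the indifference condition between activating at step $t$ and step $t+1$ (the paper's $\lambda_t\, c(t+1,t)=d(t+1,t)$, i.e.\ equality in~\eqref{eqn:compare} for the pair $(t+1,t)$), the relation $W(1,t)=W(0,t-1)$ comes from the one-step lag after observing state $1$, and strict indexability is obtained by showing that C1 forces $W(0,t)$ to be strictly increasing in $t$ via clearing denominators and exploiting the strict decrease of $p(t+1)-p(t)$. This matches the structure of the paper's Lemma~\ref{lemma:strict} (whose sufficiency direction is all that is needed) together with the subsequent monotonicity verification.
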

\begin{proof}
We first prove the following lemma that establishes a sufficient and
necessary condition for strict indexability and the associated
Whittle index.
\begin{lemma}\label{lemma:strict}
Define $W(0,t)$ as in~\eqref{eqn:whittleindex}. RMAB-IDS is strictly
indexable if and only if $W(0,t)$ is strictly increasing with $t$.
In this case, the Whittle index of state $(i,t)~(i\in\{0,1\})$ is
given by~\eqref{eqn:whittleindex} and~\eqref{eqn:whittleindex2}.
\end{lemma}
\begin{proof}
Without loss of generality, we assume that the cost $c=1$. We first
prove the necessity. If the bandit is strictly indexable, the states
$\{(0,t)\}_{t\ge1}$ join the passive set one by one as $\lambda$
continuously increases. From the proof of
Theorem~\ref{thm:indexability}, after observing component state $0$,
it is optimal to activate the arm at the $t$-th step under subsidy
$\lambda$ if and only if
\begin{eqnarray}\label{eqn:hold1} \lambda \ge
\frac{d(t,s)}{c(t,s)},~\forall~s<t,\\\label{eqn:hold2} \lambda \le
\frac{d(u,t)}{c(u,t)},~\forall~u>t,
\end{eqnarray}where
{\small\begin{eqnarray}\nn c(x,y)&\defeq& x-y+p(x)-p(y),\\\nn
d(x,y)&\defeq&\sum_{k=1}^{x}p(k)(y+p(y))-\sum_{k=1}^{y}p(k)(x+p(x)).
\end{eqnarray}}Consider an arbitrary $v\ge1$. If
both~\eqref{eqn:hold1} and~\eqref{eqn:hold2} hold with equality by
letting $(u,t,s)=(v+2,v+1,v)$ and $\lambda=W(0,v)$, than Whittle
indexes for states $(0,v)$ and $(0,v+1)$ would be the same. This
contradicts the strict indexability. We thus have that
$d(v+1,v)/c(v+1,v)$ is strictly increasing at $v$.

Now we prove the sufficiency. Assume that $W(0,t)$ is strictly
increasing with $t$. This implies that $W(0,t)$ is positive for all
$t$ since
\[W(0,1)=p(2)-p(1)+p^2(1)>0.\]For an arbitrary $v\ge 1$, there must exist a
subsidy $\lambda>0$ such that both~\eqref{eqn:hold1}
and~\eqref{eqn:hold2} hold with strict inequality by letting
$(u,t,s)=(v+2,v+1,v)$. So the Whittle index for state $(0,v)$ is
smaller than this $\lambda$ while the Whittle index for state
$(0,v+1)$ is larger than it. This proves the strict indexability.

Under the strict indexability, if we set the subsidy $\lambda_t$ as
the Whittle index of state $(0,t)$, then it is optimal to either
activate on $(0,t)$ or wait one more step to activate on $(0,t+1)$.
We thus have
\begin{eqnarray}
\lambda_t c(t+1,t)=d(t+1,t),
\end{eqnarray}which leads to the Whittle index of state $(0,t)$ as
given in~\eqref{eqn:whittleindex}. Recall that for any nonnegative
subsidy, the optimal activation time after observing component state
$1$ is one step later compared to that after observing component
state $0$. we arrive at $W(1,t)=W(0,t-1)$ for $t\ge2$. Based on the
proof of Theorem~\ref{thm:indexability}, it is not hard to see that
$W(1,1)=0$. We thus proved the lemma.
\end{proof}
\vspace{.5em}

Based on Lemma~\ref{lemma:strict}, we only need to prove that C1
implies the strict monotonicity increasing property of $W(0,t)$.
Equivalent, for any $t\ge1$, we need to prove
\begin{eqnarray}\label{eqn:ineq}
\frac{d(t+2,t+1)}{c(t+2,t+1)}>\frac{d(t+1,t)}{c(t+1,t)}.
\end{eqnarray}Define $\delta(t)\defeq p(t+1)-p(t)$ which is positive under C1. By
simplifying~\eqref{eqn:ineq}, it is equivalent to prove
\begin{eqnarray}\nn
&p(t+1)t\delta(t)+p^2(t+1)\delta(t)
+\delta(t)\delta(t+1)\\\nn&+p(t+1)t+p^2(t+1) +\delta(t+1)(t+1)\\\nn
&>p(t)t\delta(t+1)+p^2(t)\delta(t+1)\\\label{eqn:ineq1}&+\delta(t)\delta(t+1)p(t)
+p(t)t+p^2(t)+\delta(t)p(t)+\delta(t)t.~~~
\end{eqnarray}Since $p(t)$ is increasing and $\delta(t)$ is strictly
decreasing with $t$ (under C1), we have
\begin{eqnarray}\nn
&p(t+1)t\delta(t)+p^2(t+1)\delta(t)
+\delta(t)\delta(t+1)\\\label{eqn:ineq2}&
>p(t)t\delta(t+1)+p^2(t)\delta(t+1)+\delta(t)\delta(t+1)p(t).~~
\end{eqnarray}To prove~\eqref{eqn:ineq1}, it is sufficient to prove
\begin{eqnarray}\nn
&p(t+1)t+p^2(t+1)+\delta(t+1)(t+1)\\\label{eqn:ineq3}&>p(t)t+p^2(t)+\delta(t)p(t)+\delta(t)t.
\end{eqnarray}After some simplifications of~\eqref{eqn:ineq3}, we need to prove
\begin{eqnarray}\label{eqn:ineq4}
\delta(t)p(t+1)+\delta(t+1)(t+1)>0,
\end{eqnarray}which is always true under C1. We thus proved
Theorem~\ref{thm:whittleindex}.
\end{proof}

The near-optimal performance of Whittle index policy is observed
through numerical examples (see Sec.~\ref{sec:num}). In
Sec.~\ref{sec:optwhittle}, we show that when all components are
homogeneous, Whittle index policy is equivalent to the myopic policy
and achieves the optimal performance.

\subsection{The Optimal Policy under a Relaxed Constraint}
In this subsection, we consider the scenario with a relaxed resource
constraint, where we only require the {\em average} number of
activated arms to be no more than $K$. This scenario often arises in
systems where the resource constraint is more strict on the average
value rather than the peak value, \eg the energy-saving systems.
Under the relaxed constraint, the indexability and the Whittle index
leads to a simple optimal policy for RMAB-IDS.

As explained by Whittle in~\cite{whittle}, the subsidy $\lambda$ for
passivity is essentially the Lagrangian multiplier for the general
RMAB with the following relaxed constraint
\begin{eqnarray}\label{eqn:c1}
\mathbb{E}_{\pi}\left[\lim_{T\rightarrow\infty}\frac{1}{T}\sum_{t=1}^TK(t)\right]\le
K,
\end{eqnarray}where $K(t)$ is the number of activated arms at time
$t$. Specifically, the subsidy $\lambda$ controls the expected time
portion, \ie the stead-state probability $\pi_n(\lambda)$, that arm
$n~(1\le n\le N)$ is made active under the corresponding single-arm
optimal policy. For RMAB-IDS, under the optimal subsidy $\lambda^*$,
we have
\begin{eqnarray}\label{eqn:c2}
\mathbb{E}_{\pi^*}\left[\lim_{T\rightarrow\infty}\frac{1}{T}\sum_{t=1}^TK(t)\right]=
\sum_{n=1}^N\pi_n(\lambda^*)=K
\end{eqnarray}and~\eqref{eqn:c1} is
satisfied with equality.

Given the optimal subsidy $\lambda^*$, the optimal policy under the
relaxed constraint is simply given by the composition of $N$
independent single-arm optimal policies (applied on the $N$ arm
respectively) under the common subsidy $\lambda^*$. Specifically, at
each time, if the Whittle index of an arm is larger than $\lambda^*$
then we activate the arm; otherwise we make the arm passive. Note
that if the Whittle index of an arm is equal to $\lambda^*$,
randomizing between the active and passive actions would be
necessary to satisfy~\eqref{eqn:c2} as detailed in~\cite{Weber}.
Given the closed-form Whittle index established in
Theorem~\ref{thm:whittleindex}, it remains to solve for the optimal
subsidy $\lambda^*$. Note that based on the Lagrangian multiplier
theorem~\cite{whittle}, we have
\begin{eqnarray}\label{eqn:optimalsubsidy}
\lambda^*=\arg\min_{\lambda}\{\sum_{n=1}^Ng_n(\lambda)-(N-K)\lambda\},
\end{eqnarray}where $g_n(\lambda)$ is the maximum average reward of arm
$n$ under the single-arm policy for subsidy $\lambda$ and is convex
in $\lambda$. From the closed-form Whittle index, it is not hard to
solve for the optimal stopping times $\{t_i^*(\lambda)\}_{i=0,1}$
(see~\eqref{eqn:g}) and the maximum average reward $g(\lambda)$ for
each $\lambda$. We can then obtain the optimal $\lambda^*$
from~\eqref{eqn:optimalsubsidy} by any classic algorithm for finding
the minimum of a convex function.

\section{Optimality in Homogeneous Networks}\label{sec:optwhittle}
In this section, we study the performance of Whittle index policy in
homogeneous networks, \ie all components have the same parameters:
the probability sequence $\{p(t)\}_{t\ge0}$ and the per-unit cost
$c$ for being abnormal.

We first establish the equivalence of Whittle index policy with the
myopic policy for homogeneous components. In general, the myopic
policy chooses the $K$ components to solely minimize the expected
cost in the next slot. It is not hard to show that for homogeneous
components, the myopic policy is reduced to choosing the $K$
components with the largest probabilities of being in the abnormal
state. The myopic action $\hat{\Ac}(\cdot)$ as a function of the
current states of all arms is thus given below.
{\small\begin{eqnarray}\nn
\hat{\Ac}(\{i_n,t_n\}_{n=1}^N)&=&\arg\max_{\Ac}\{\sum_{n:a_n=1}
\Pr(S_n=1|(i_n,t_n))\}\\\nn&=&\arg\max_{\Ac}\{\sum_{n:a_n=1}(p(t_n)(1-i_n)\\\label{eqn:myopic1}
&&+p(t_n-1)i_n)\}.
\end{eqnarray}}
\begin{lemma}\label{lemma:equivalence}
For homogeneous components, Whittle index policy is equivalent to
the myopic policy and has the following simple structure: initialize
a queue in which components are ordered according to the descending
order of their initial probabilities of being in the abnormal state.
Each time we probe the $K$ components at the head of the queue. In
the next slot, these $K$ components will be moved to the bottom of
the queue while keeping those observed in state $1$ a higher
position than those observed in state $0$.
\end{lemma}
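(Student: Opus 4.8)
The plan is to establish the two assertions in sequence: first that Whittle index policy coincides with the myopic policy on homogeneous arms, and then that the common rule admits the stated queue implementation.

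For the equivalence, I would reduce both rankings to a single scalar. For an arm in state $(i,t)$ define its effective age $\tau\defeq t-i$. By \eqref{eqn:myopic1} the abnormal-state probability equals $p(t)(1-i)+p(t-1)i=p(t-i)=p(\tau)$, so the myopic policy ranks arms by $p(\tau)$. By Theorem~\ref{thm:whittleindex} the Whittle index satisfies $W(1,t)=W(0,t-1)$, hence $W(i,t)=W(0,t-i)=W(0,\tau)$, so Whittle index policy ranks arms by $W(0,\tau)$. Both $p(\tau)$ and $W(0,\tau)$ are strictly increasing in the common parameter $\tau$---the former because $\{p(t)\}_{t\ge0}$ is increasing (strictly under C1), the latter by the strict monotonicity of $W(0,\cdot)$ proved in Theorem~\ref{thm:whittleindex}. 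Consequently the $K$ arms of largest Whittle index and the $K$ arms of largest abnormal-state probability form the same set at every configuration, which is exactly the equivalence.

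For the queue structure I would argue by induction on the time slot that the policy is realized by a queue kept in descending order of the priority $p(\tau)$, equivalently of $\tau$. Because the priority of an arm is a monotone function of $\tau=t-i$ alone, only the relative order of the arms, and not the values $\{p(t)\}$, enters the decision; this is the robustness to the unknown attack model asserted in the lemma. The base case is the prescribed initialization by descending initial abnormal probability. For the inductive step, assume the queue is sorted at the current slot. Probing the $K$ components at the head then selects precisely the $K$ largest-$\tau$ arms, matching the myopic (hence Whittle) choice. Under the arm-state transition $\Gamma(\cdot)$, each of the $N-K$ unprobed arms undergoes $\tau\mapsto\tau+1$, so the mutual order of the unprobed block is preserved; each probed arm is reset to $\tau=1$ at state $(0,1)$ when observed healthy and to $\tau=0$ at state $(1,1)$ when observed abnormal. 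Since every unprobed arm now has $\tau\ge1$ while every probed arm has $\tau\le1$, the probed block descends to the bottom, where the two reset states $(0,1)$ and $(1,1)$ carry the two smallest priorities $p(1)$ and $p(0)$ and therefore occupy the tail in descending order. This reproduces the claimed queue operations.

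The step I expect to be the main obstacle is the inductive verification that the reset arms always fall to the bottom while the sorted invariant is maintained. The delicate case is the boundary situation in which an unprobed arm previously at the tail in state $(1,1)$ (with $\tau=0$) is incremented to $\tau=1$ in the same slot that a probed arm is reset to $(0,1)$ (also $\tau=1$); this creates a tie in $p(\tau)$ that must be shown to be benign, \ie it never changes which $K$ arms sit at the head at the next slot and hence never affects the selection. Once the collapse of the two-dimensional arm state $(i,t)$ to the scalar $\tau=t-i$---justified by the two monotonicities above---is in place, the remainder is bookkeeping on the queue.
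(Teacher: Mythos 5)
Your proposal is correct and follows essentially the same route as the paper: both reduce the two rankings to the single scalar $p(t-i)$ via the shift relation $W(1,t)=W(0,t-1)$ and the monotonicity of $W(0,\cdot)$ in its argument, and both obtain the queue structure from the reset probabilities $0$ and $p(1)$ together with order preservation among the unprobed arms. The only differences are cosmetic: you route the monotonicity through the closed form of Theorem~\ref{thm:whittleindex} (hence through C1) where the paper extracts it directly from the indexability proof of Theorem~\ref{thm:indexability}, and you make explicit the backward bookkeeping induction and the benign $p(1)$-tie between a newly reset healthy arm and a previously repaired unprobed arm, both of which the paper leaves implicit.
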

\begin{proof}
Based on the proof of Theorem~\ref{thm:indexability}, the Whittle
index $W(i,t)$ of an arm is monotonically increasing with $t$ for
fixed $i\in\{0,1\}$ and $W(1,t)=W(0,t-1)$ with $W(0,0)=0$. Based on
the monotonic increasing property of $\{p(t)\}_{t\ge0}$, it is not
hard to see that the Whittle index $W(i,t)$ is monotonically
increasing with $\Pr(S=1|(i,t))$. Whittle index policy is thus
equivalent to the myopic policy for homogeneous arms.

From the equivalence of Whittle index policy with the myopic policy,
its structure is straightforward since based on the current
observations, all components observed in state $1$ will have zero
probability of being abnormal and those observed in state $0$ will
have the second smallest probability $p(1)$ of being abnormal, while
those unobserved arms will have the same rank in the probability of
being abnormal in the next slot based on the monotonicity of
$\{p(t)\}_{t\ge0}$.
\end{proof}

From Lemma~\ref{lemma:equivalence}, Whittle index policy can be
implemented without knowing the system parameters $\{p(t)\}_{t\ge0}$
and $c$. Furthermore, Whittle index policy is optimal, as given in
the following theorem.

\begin{theorem}\label{thm:opt}
For homogeneous components, Whittle index policy minimizes the
expected total cost over a finite time horizon of an arbitrary
length $T~(T\ge1)$. It is thus also optimal under the strong
average-reward criterion over the infinite time horizon.
\end{theorem}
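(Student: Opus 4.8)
The plan is to reduce the claim to the myopic policy and then prove its finite‑horizon optimality by backward induction, carrying two interlocking structural invariants through the dynamic program. First I would invoke Lemma~\ref{lemma:equivalence}: for homogeneous components Whittle index policy coincides with the myopic rule, so it suffices to show that probing the $K$ arms with the largest current abnormal probability minimizes the expected total cost over every finite horizon $T$. I work in the arm‑state space of Lemma~1, but it is convenient to summarize each arm by its \emph{effective age} $\tau_n\defeq t_n-i_n$, so that $\Pr(S_n=1\mid(i_n,t_n))=p(\tau_n)$ and, by homogeneity, the value function is symmetric and depends only on the multiset $\{\tau_n\}$. A crucial feature, read off the single‑arm equations~\eqref{eqn:dynamicSingle}--\eqref{eqn:dynamicSingle1} (the term $-p(t)c$ appears in \emph{both} the active and passive branches), is that the per‑slot expected cost $c\sum_n p(\tau_n)$ is \emph{independent of the action}; probing affects only the future. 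Writing $V_m$ for the minimal expected cost‑to‑go with $m$ steps remaining,
\[
V_m(\{\tau_n\})=c\sum_n p(\tau_n)+\min_{|\Ac|=K}\mbbE\big[V_{m-1}(\Gamma(\{\tau_n\},\Ac))\big],
\]
where the joint transition $\Gamma$ sends a probed arm to age $0$ with probability $p(\tau_n)$ and to age $1$ otherwise, and ages every unprobed arm to $\tau_n+1$.

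The induction would carry two properties of $V_m$: (M) \emph{monotonicity}, i.e. $V_m$ is nondecreasing in each age; and (L) a \emph{bounded‑difference} property, i.e. incrementing a single arm's age raises $V_m$ by at most $c\,(1-p(\tau_n))\le c$. Given (M) for $V_{m-1}$, the optimality of the myopic action at step $m$ follows from an interchange argument: if a candidate action probes an arm $i$ while skipping an arm $j$ with $\tau_i<\tau_j$, I swap the two probes. Since the immediate cost is action‑independent, it remains only to compare futures, which I do with a monotone coupling of the two probe outcomes through a single uniform variable $U$. In each of the three regions of $U$ the next arm‑state produced by the swapped action is componentwise dominated (after sorting) by that of the original action, so (M) yields a no‑larger expected cost‑to‑go; repeating the swaps shows the top‑$K$ set is optimal.

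Propagating (M) and (L) themselves uses the same coupling together with the optimality of the myopic action, and here the two invariants feed each other. The only region in which a larger age \emph{helps} is when a probed, more‑abnormal arm is caught and reset to the clean age $0$ rather than age $1$; this contributes a future saving of exactly $\big(p(\tau+1)-p(\tau)\big)\big[V_{m-1}(\text{age }1)-V_{m-1}(\text{age }0)\big]$, which (L) at step $m-1$ bounds by $c$ per unit, precisely offsetting the matching increase in immediate cost and thereby preserving (M). Conversely, (L) propagates by telescoping these per‑step increments, using (M) to discard the catch‑and‑reset region; the telescoping is exactly what collapses the crude bound to the sharp constant $c\,(1-p(\tau_n))$.

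The main obstacle is precisely this interaction. Plain monotonicity in the abnormal probabilities is false under the ``cost‑after‑fixing'' convention and is delicate even under the correct one, because probing an arm that is \emph{more} likely abnormal yields a \emph{cleaner} future (reset to age $0$). Monotonicity survives only because that future advantage is capped at the one‑slot cost $c$, which is the content of (L); getting (M) and (L) to close \emph{simultaneously}, and handling the boundary cases where incrementing an age moves an arm across the top‑$K$ threshold (forcing a swap of the probed set), is where the real work lies. Once finite‑horizon optimality holds for every $T$, optimality under the strong average‑reward criterion is immediate: that criterion is the $T\to\infty$ limit of the finite‑horizon total‑cost objective, so a policy optimal for all horizons is optimal for the limit, and by Lemma~\ref{lemma:equivalence} the same conclusion transfers to Whittle index policy.
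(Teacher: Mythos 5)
Your proposal matches the paper's proof in all essentials: reduction to the myopic policy via Lemma~\ref{lemma:equivalence}, backward induction with an interchange argument resting on monotonicity of the value function in each abnormal probability (the paper's Lemma~\ref{lemma:monotone}), and closing that induction with exactly the bounded-difference fact you call (L) (the paper's inequality~\eqref{eqn:v4}, which caps the future advantage of catching and resetting a more-abnormal arm by the one-slot cost $c$, the two properties feeding each other across induction steps just as you describe). The coupling phrasing of the swap and the statement of (L) as a separate named invariant are only presentational differences from the paper's direct two-inequality chain.
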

\begin{proof}
We prove the theorem based on a backward induction on the time
horizon. Any policy, including Whittle index policy, is optimal at
the last time instant $t=T$ since the current action affects only
the future cost but not the immediate cost. Now assume that Whittle
index policy is optimal at time instants $t+1,t+2,\cdots,T$. We need
to prove that it is optimal at time $t$. Without loss of generality,
we set $c=1$. Let $\Omega(t)=(\omega_1,\omega_2,\cdots,\omega_N)$
with $\omega_n\in\{p(t)\}_{t\ge0}$ denote an unordered set
consisting of probabilities that the $N$ components are in state $1$
at time $t$. Define the value function $V_t(\Omega(t))$ of Whittle
index policy as the expected total cost from time $t$ up to $T$.
Consider a policy that activate the $K$ components with
probabilities $(\omega_1,\omega_2,\cdots,\omega_K)$ of being in
state $1$ at time $t$ and follows Whittle index policy in the future
time instants up to time $T$. The value function
$\hat{V_t}(\omega_1,\omega_2,\cdots,\omega_N)$, \ie the expected
total cost from time $t$ up to $T$, of this policy is given by
\begin{eqnarray}\nn
\hat{V_t}(\omega_1,\omega_2,\cdots,\omega_N)=\sum_{k=1}^N\omega_k
\\\nn+\mathbb{E}[V_{t+1}(\underbrace{0,\cdots,0}_{k_1~\mbox{times}},
\underbrace{p(1),\cdots,p(1)}_{k_0~\mbox{times}},\cdots,\tau(\omega_N))],
\end{eqnarray}where the expectation is taken over the random variables $\{k_i\}_{i=0,1}~
(k_1+k_0=K)$ that denote respectively the number of components
observed in state $1$ and state $0$, and $\tau(\cdot)$ denote the
one-step update of the abnormal probability for unobserved
components based on $\{p(t)\}_{t\ge0}$. Note that if $\omega_1\ge
\omega_2\ge\cdots \omega_N$, then $\hat{V_t}=V_t$.

To prove that Whittle index policy, \ie the myopic policy, is
optimal at time $t$, it is sufficient to prove that for any $y\ge
x,~x,y\in\{p(t)\}_{t\ge0}$,
\begin{eqnarray}\nn
\hat{V_t}(\omega_1,\cdots,y,\cdots,x,\cdots,\omega_N)\\\label{eqn:opt}\le
\hat{V_t}(\omega_1,\cdots,x,\cdots,y,\cdots,\omega_N).
\end{eqnarray}This means that a component with higher probability of
being in state $1$ should be given a higher priority. To
show~\eqref{eqn:opt}, we first present the following lemma that
establishes the monotonicity of the value function of Whittle index
policy.
\begin{lemma}\label{lemma:monotone}
The value function $V_t(\omega_1,\omega_2,\cdots,\omega_N)$ of
Whittle index policy is an increasing function at each entry
$\omega_n~(n\in\{1,2,\cdots,N\})$.
\end{lemma}
\begin{proof}
Without loss of generality, we assume that all probabilities within
$V_t(\cdot)$ are in a descending order. The proof is based on a
backward induction on time $t$. If $t=T$, the claim is clearly true.
Assume that the lemma holds for $s=t+1,t+2,\cdots,T$. Consider time
$t$. We need to show
\begin{eqnarray}\label{eqn:monotone}
V_t(\overrightarrow{\omega_1},y,\overrightarrow{\omega_2})\ge
V_t(\overrightarrow{\omega_1},x,\overrightarrow{\omega_2}),~\forall~y\ge
x,~x,y\in\{p(t)\}_{t\ge0},
\end{eqnarray}where
$\overrightarrow{\omega_1},\overrightarrow{\omega_2}$ are arbitrary
(possibly empty) probability vectors with
$|\overrightarrow{\omega_1}|+|\overrightarrow{\omega_2}|=N-1$.

Define $t_1\ge1$ as the first stopping time that the component
denoted by $y/x$ in~\eqref{eqn:monotone} is probed under Whittle
index policy. Based on the structure of Whittle index policy, $t_1$
is deterministic. We have {\begin{eqnarray}\label{eqn:v1}\small
V_t(\vec{\omega}_1,y,\vec{\omega}_2)&=&
A(\vec{\omega_1},\vec{\omega_2}) +\sum_{k=1}^{t_1}\tau^{k-1}(y)\\\nn
&&+\mathbb{E}[\tau^{t_1-1}(y)V_{t+t_1}(\vec{\omega}'_1,0,\vec{\omega}'_2)\\\nn
&&+(1-\tau^{t_1-1}(y))V_{t+t_1}(\vec{\omega}'_1,p(1),\vec{\omega}'_2)],\\\label{eqn:v2}
V_t(\vec{\omega}_1,x,\vec{\omega}_2)&=&
A(\vec{\omega}_1,\vec{\omega}_2) +\sum_{k=1}^{t_1}\tau^{k-1}(x)\\\nn
&&+\mathbb{E}[\tau^{t_1-1}(x)V_{t+t_1}(\vec{\omega}'_1,0,\vec{\omega}'_2)\\\nn
&&+(1-\tau^{t_1-1}(x))V_{t+t_1}(\vec{\omega}'_1,p(1),\vec{\omega}'_2)],
\end{eqnarray}}where
$A(\vec{\omega_1},\vec{\omega_2})$ is the expected total cost up to
$t_1$ determined by components other than that denoted by $y/x$,
vectors $\vec{\omega}'_1,\vec{\omega}'_2)$ are stochastically
determined by $\vec{\omega}_1,\vec{\omega}_2$ based on the
observations between time $t$ and $t+t_1-1$, and $\tau^{k}(\cdot)$
denotes the $k$-th iteration of operator $\tau(\cdot)$. We point out
that based on the structure of Whittle index policy, the total cost
$A(\vec{\omega_1},\vec{\omega_2})$ does not depend on the state of
the component denoted by $y/x$. From~\eqref{eqn:v1}
and~\eqref{eqn:v2}, we have that~\eqref{eqn:monotone} holds if
{\begin{eqnarray}\nn\small
\sum_{k=1}^{t_1-1}(\tau^{k-1}(y)-\tau^{k-1}(x))
+(\tau^{t_1-1}(y)-\tau^{t_1-1}(x))
\mathbb{E}[\\\label{eqn:v3}1+V_{t+t_1}(\vec{\omega}'_1,0,\vec{\omega}'_2)
-V_{t+t_1}(\vec{\omega}'_1,p(1),\vec{\omega}'_2)]\ge0.
\end{eqnarray}}From the monotonic increasing property of $\{p(t)\}_{t\ge0}$, we
have \[\tau^{k}(y)-\tau^{k}(x)\ge0,~\forall~y\ge
x,~x,y\in\{p(t)\}_{t\ge0},~k\ge0.\] To show~\eqref{eqn:v3}, it is
sufficient to show
\begin{eqnarray}\label{eqn:v4}
\mathbb{E}[1+V_{t+t_1}(\vec{\omega}'_1,0,\vec{\omega}'_2)
-V_{t+t_1}(\vec{\omega}'_1,p(1),\vec{\omega}'_2)]\ge0.
\end{eqnarray}Starting from time $t+t_1$, define $t_2$ as the first stopping time that the
component denoted by $0/p(1)$ is probed under Whittle index policy.
Between time $t+t_1$ to $t+t_1+t_2$, the difference in the expected
total cost incurred by this component when its abnormal
probabilities are respectively given by $0$ and $p(1)$ is equal to
$p(t_2)$. This is because that the update of the abnormal
probability when staring from $0$ is one step lagged of that from
$p(1)$. Again, based on the structure of Whittle index policy, the
expected total cost incurred by other components is independent of
the state of this component. By expanding the value function
in~\eqref{eqn:v4} at time $t+t_1+t_2$ and after some
simplifications, it is equivalent to show
\begin{eqnarray}\nn\label{eqn:v5}
\mathbb{E}[1-p(t_2)+(p(t_2-1)-p(t_2))
\mathbb{E}[\\\label{eqn:v5}V_{t+t_1+t_2}(\vec{\omega}''_1,0,\vec{\omega}''_2)
-V_{t+t_1+t_2}(\vec{\omega}''_1,p(1),\vec{\omega}''_2)]\ge0,
\end{eqnarray}where vectors $\vec{\omega}'_1,\vec{\omega}'_2)$ are stochastically
determined by $\vec{\omega}_1,\vec{\omega}_2$ based on observations
between time $t+t_1$ and $t+t_1+t_2-1$. By induction, for any
$\vec{\omega}''_1,\vec{\omega}_2''$,
\begin{eqnarray}\nn
V_{t+t_1+t_2}(\vec{\omega}''_1,0,\vec{\omega}''_2)
-V_{t+t_1+t_2}(\vec{\omega}''_1,p(1),\vec{\omega}''_2)\le 0.
\end{eqnarray}It is thus not hard to see that~\eqref{eqn:v5} holds.
Note that for the realizations of $t_1$ and $t_2$ such that
$t+t_1>T$ and/or $t+t_1+t_2>T$, the monotonicity of the conditional
value function is straightforward to prove. We thus proved the
lemma.
\end{proof}
Now we are ready to prove~\eqref{eqn:opt}. If the positions of $y$
and $x$ are both in top $K$ or both after top $K$, then the
inequality holds with equality. Consider the case that $y$ is in top
$K$ but $x$ not. We have for any probability vectors
$\{\vec{\omega}_i\}_{i=1,2,3}$,
\begin{eqnarray}\nn
&&~~~~~~~~~~~~~~~~~~~~~~\hat{V_t}(\vec{\omega}_1,y,\vec{\omega}_2,x,\vec{\omega}_3)\\\nn
&=&\mathbb{E}[yV_{t+1}(\vec{\omega}'_1,\vec{\omega}'_2,\tau(x),\vec{\omega}'_3,0)\\\nn
&&+(1-y)V_{t+1}(\vec{\omega}'_1,\vec{\omega}'_2,\tau(x),\vec{\omega}'_3,p(1))]\\\nn
&\le&\mathbb{E}[yV_{t+1}(\vec{\omega}'_1,\vec{\omega}'_2,\tau(y),\vec{\omega}'_3,0)\\\nn
&&+(1-y)V_{t+1}(\vec{\omega}'_1,\vec{\omega}'_2,\tau(y),\vec{\omega}'_3,p(1))]\\\nn
&\le&\mathbb{E}[xV_{t+1}(\vec{\omega}'_1,\vec{\omega}'_2,\tau(y),\vec{\omega}'_3,0)\\\nn
&&+(1-x)V_{t+1}(\vec{\omega}'_1,\vec{\omega}'_2,\tau(y),\vec{\omega}'_3,p(1))]\\\nn
&=&\hat{V_t}(\vec{\omega}_1,x,\vec{\omega}_2,y,\vec{\omega}_3),
\end{eqnarray}where
$\vec{\omega}'_1,\vec{\omega}'_2,\vec{\omega}'_3$ are stochastically
determined by $\vec{\omega}_1,\vec{\omega}_2,\vec{\omega}_3$ based
on the observation at time $t$, and the two inequalities are due to
Lemma~\ref{lemma:monotone}. We thus proved the optimality of Whittle
index policy over a finite horizon of an arbitrary length $T$. By
contradiction, if Whittle index policy is not optimal under the
strong average-reward criterion, there must exist a $T_0$ such that
Whittle index policy performs worse than the optimal policy over the
horizon of length $T_0$. Consequently, Whittle index policy is also
optimal under the strong average-reward criterion over the infinite
time horizon.
\end{proof}

\section{Numerical Examples}\label{sec:num}
In this section, we present some numerical examples and evaluate the
performance of Whittle index policy for nonhomogeneous components.

In Fig.~\ref{fig:WI}, we illustrate the Whittle index as a function
of the arm state. The monotonicity and concavity of the Whittle
index are observed. In Fig.~\ref{fig:perform2}, we compare the
performance of Whittle index policy versus the optimal policy. Due
to the complexity of the dynamic programming problem given
in~\eqref{eqn:dynamic}, we only computed the optimal cost over a
short time horizon. Note that the cost under the non-stationary
optimal policy over a finite time horizon is a lower bound on that
achieved by the stationary optimal policy over the infinite time
horizon. We observe that Whittle index policy achieves a
near-optimal performance.

In Fig.~\ref{fig:perform4}, we compare Whittle index policy with the
myopic policy over a long time horizon. We observe that for
inhomogeneous components, Whittle index policy outperforms the
myopic policy, and the performance improvement becomes significant
as time goes.

Numerical results similar to the above have been observed through
extensive examples with randomly generated system parameters.

\begin{figure}
\centerline{
\begin{psfrags}
\scalefig{.4}\epsfbox{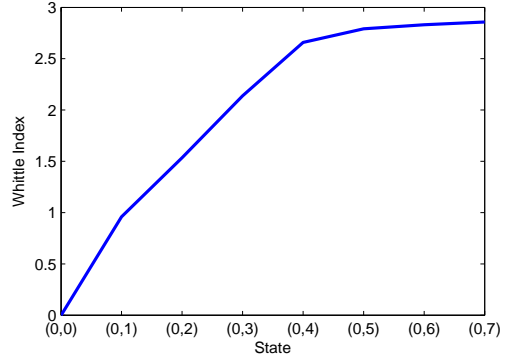}
\end{psfrags}}
\caption{The Whittle index ($\{p(t)\}_{0\le t\le8}=$[0,0.5,0.7,0.85,
0.95,0.97,.975,.978,.98], $c=1$).} \label{fig:WI}
\end{figure}

\begin{figure}
\centerline{
\begin{psfrags}
\scalefig{.39}\epsfbox{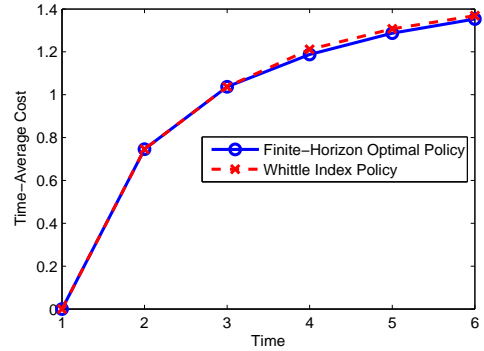}
\end{psfrags}}
\caption{The near-optimality of Whittle index policy
($K=1$,~$N=4$,~$\{p_n(t)\}_{n=1,2,\cdots,4,0\le t\le6}=$
$[0,.5,.7,.85,.95,.97,.975;0,.3,.4,.48,.54,.57,.59;$
$0,.36,.46,.5,.53,.55,.56;0,.6,.78,.9,.96,.98,.99]$,
$\{c_n\}_{n=1,2,\cdots,4}=[.8,1,1.2,.9]$, all components start from
the healthy state).} \label{fig:perform2}
\end{figure}

\begin{figure}
\centerline{
\begin{psfrags}
\scalefig{.39}\epsfbox{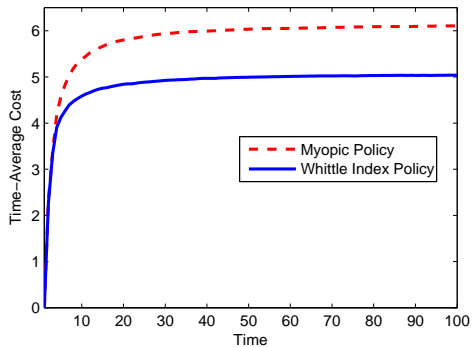}
\end{psfrags}}
\caption{The performance of Whittle index policy versus the myopic
policy ($K=2$,~$N=8$,~Markovian state model,
$\{q_n\}_{n=1,2,\cdots,8}=$ $[.2,.3,.3,.5,.6,.7,.7,.8],$~
$\{c_n\}_{n=1,2,\cdots,8}=[2.5,2,1.8,1.5,1.2,1,.6,.5]$, all
components start from the healthy state).} \label{fig:perform4}
\end{figure}

\begin{figure}
\centerline{
\begin{psfrags}
\psfrag{a}[c]{\small Random Customer
Arrivals}\psfrag{b1}[c]{\footnotesize~Buffer $1$}
\psfrag{b2}[c]{\footnotesize~Buffer
$2$}\psfrag{bn}[c]{\footnotesize~~Buffer $N$}
\psfrag{s1}[c]{\footnotesize~~Server $1$}
\psfrag{s2}[c]{\footnotesize~~Server
$2$}\psfrag{sk}[c]{\footnotesize~~~Server $K$}
\scalefig{.39}\epsfbox{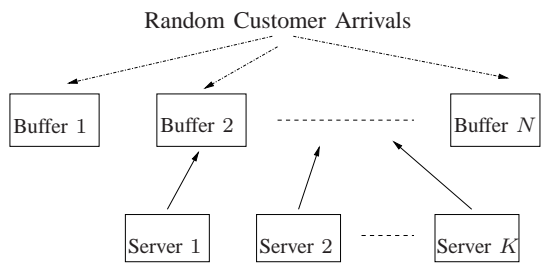}
\end{psfrags}}
\caption{The equivalent queuing model of RMAB-IDS.} \label{fig:q}
\end{figure}

\section{Applications to Queuing Networks}\label{sec:queue}
Another application of the RMAB model considered in this paper is on
holding cost minimization in queueing networks. Consider a queuing
network where customers randomly arrive at $K$ servers. As shown in
Fig.~\ref{fig:q}, all servers share a set of $N$ finite-size buffers
(for $N$ different classes of customers) that are either empty or
full based on the batch arrivals. We assume that new customers of a
class do not arrive if the corresponding server is full. At each
time, each server chooses one buffer to serve and clear its packets.
The objective is to minimize the holding cost (\eg delay) of the
customers. By likening a customer arrival to an attack, it is not
hard to see that the problem can be modeled as the RMAB at hand
under certain conditions, \eg when the arrival process of each class
is i.i.d. or Markovian over time (given the buffer is empty). Such a
queuing network often arises in backorder control systems and
peer-to-peer communication networks. For example, in a backorder
control system, random orders for $N$ commodities arrive at a seller
and the seller needs to decide which $K$ commodities to check and
process the corresponding orders at a given time. For each commodity
and at each time, a backorder incurs a cost depending on the level
of urgency and/or value of the order. In a peer-to-peer
communication network, there are $N$ communication links where each
link has a pair of nodes exchanging messages. At each time, only $K$
links can be turned on for communications and the cost can be
modeled as the delay of each message. A potential future direction
is to study the case in which the buffer can be partially full and
new arrivals come regardless of the state of the buffer. The joint
minimization of the holding cost and the customer loss cost can be
considered. Such scenario is essentially a generalized version of
the RMAB with stochastically time-varying instantaneous cost
$c_n(t)$. It is also interesting to extend the RMAB to partial reset
models for handling more general customer arrival processes.

\section{Conclusion}\label{sec:con}
In this paper, we studied the intrusion detection problem in large
cyber networks under general attack processes. By adopting a reset
model of the network dynamics, we formulated the problem as a class
of RMAB under a strong average-reward criterion. We showed that this
class of RMAB is indexable and Whittle index can be solved in
closed-form. This result leads to a low-complexity implementation of
Whittle index policy that achieves a near-optimal performance. We
further showed that for homogeneous components, Whittle index policy
can be implemented without knowing the system parameters and is
optimal over both finite and infinite time horizons.

\end{document}